% a sample file for Journal of Quantum Information and Computation (QIC) in 
% LaTex2e by inputing macro file "qic.sty" with command \usepackage{qic}, 
% all the macros have been defined in the style file, so it is no need to 
% put many macros at the beginning of the text file  

%\documentclass[draft]{article}
\documentclass[article]{article}
\usepackage{qic,epsfig}
\usepackage{amsmath}
\usepackage{amsfonts}
\usepackage{enumerate}
\usepackage{amssymb}
\usepackage{xcolor}

\usepackage{ifdraft}

\newcommand{\remove}[1]{}

\textwidth=5.6truein
\textheight=8.0truein

  %use symbolic footnote

\newcommand{\p}{\mathsf{P}}
\newcommand{\binset}{\{0,1\}}
\newcommand{\BQP}{\mathsf{BQP}}
\newcommand{\PreciseBQP}{\mathsf{PreciseBQP}}
\newcommand{\PostBQP}{\mathsf{PostBQP}}
\newcommand{\BPP}{\mathsf{BPP}}

\newcommand{\NP}{\mathsf{NP}}

\newcommand{\QMA}{\mathsf{QMA}}

\newcommand{\PreciseQMA}{\mathsf{PreciseQMA}}
\newcommand{\PostQMA}{\mathsf{postQMA}}

\newcommand{\PreciseQCMA}{\mathsf{PreciseQCMA}}

\newcommand{\IP}{\mathsf{IP}}

\newcommand{\IPlimit}[2]{\mathsf{IP[#1,#2]}}
\newcommand{\MIP}{\mathsf{MIP}}
\newcommand{\NEXP}{\mathsf{NEXP}}
\newcommand{\PSPACE}{\mathsf{PSPACE}}
\newcommand{\BQPSPACE}{\mathsf{BQPSPACE}}

\newcommand{\PP}{\mathsf{PP}}
\newcommand{\sharpP}{\mathsf{\#P}}

\newcommand{\ket}[1]{| #1 \rangle}
\newcommand{\pair}[2]{\left(#1,#2\right)}

\newcommand{\abs}[1]{\left| #1 \right|}

\newcommand{\Tr}{\mathrm{Tr}}

\newcommand{\cL}{\mathcal{L}}

\newcommand{\bN}{\mathbb{N}}
\newcommand{\poly}{\mathrm{poly}}

\renewcommand{\Pr}{\mathrm{Pr}}

\newcommand{\qed}{$\square$}

\renewenvironment{proof}{%
  \noindent{\em Proof.\ }}{%
  \hspace*{\fill}\qed
  \vspace{2ex}}

% Theorems
\newtheorem{proposition}[theorem]{Proposition}
\newtheorem{claim}[theorem]{Claim}

%%%%%%% starting the text file 

\begin{document}
\setlength{\textheight}{8.0truein}    %FOR 2ND PAGE ONWARDS

\runninghead{ Towards a quantum-inspired proof for IP = PSPACE
}
            {A. Green, G. Kindler, and Y-P Liu }

\normalsize\textlineskip
\thispagestyle{empty}
\setcounter{page}{377}

%\copyrightheading{Vol.}{No.}{Year}{Page Nos.}
\copyrightheading{2021}{5\&6}{2021} {0377--0386}

\vspace*{0.88truein}

\alphfootnote

\fpage{377}

\centerline{\bf
%Towards a quantum-inspired proof for 
TOWARDS A QUANTUM-INSPIRED PROOF FOR $\IP=\PSPACE$}
%\vspace*{0.035truein}
%\centerline{\bf FOR QUANTUM INFORMATION AND COMPUTATION\footnote{Typeset the
%title in 10 pt Times Roman, uppercase and boldface.}}
\vspace*{0.37truein}
\centerline{\footnotesize
%%%%%%%%%%%%%%%%%%%%%%%%%%%%%%%%%%%%
%put authors' name and address here
%%%%%%%%%%%%%%%%%%%%%%%%%%%%%%%%%%%%
%Ayal Green
AYAL GREEN\footnote{ayalg@cs.huji.ac.il}, ~~~
%Guy Kindler
GUY KINDLER\footnote{gkindler@cs.huji.ac.il}, ~~{\em and}~~~ 
%Yupan Liu
YUPAN LIU\footnote{yupan.liu@gmail.com}}
\vspace*{0.015truein}
\centerline{\footnotesize\it 
School of Computer Science and Engineering}
\baselineskip=10pt
\centerline{\footnotesize\it The Hebrew University, Jerusalem, Israel}
\vspace*{0.21truein}
\publisher{September 16, 2020}{February 10, 2021}

\vspace*{0.21truein}

%% \abstracts{first paragraph}{second paragraph}{third paragraph}
%% If there is only one paragraph, just keep the second and third empty 
%% like the following one 
\abstracts{
%%%%%%%%%%%%%%%%%%%%
% put abstract here
%%%%%%%%%%%%%%%%%%%%
We explore quantum-inspired interactive proof systems where the prover is limited. Namely, we improve on a result by \cite{AG17}  showing a quantum-inspired interactive protocol ($\IP$) for $\PreciseBQP$ where the prover is only assumed to be a $\PreciseBQP$ machine, and show that the result can be strengthened to show an $\IP$ for $\NP^{\PP}$ with a prover which is only assumed to be an $\NP^{\PP}$ machine - which was not known before. We also show how the protocol can be used to directly verify $\QMA$ computations, thus connecting the sum-check protocol by \cite{AAV13} with the result of \cite{AG17,LFKN90}. 
Our results shed light on a quantum-inspired proof for $\IP=\PSPACE$, as $\PreciseQMA$ captures the full $\PSPACE$ power. 
}{}{}

\vspace*{10pt}

\keywords{interactive proofs, delegating quantum computation}
\vspace*{3pt}
\communicate{R~Cleve~\&~M~Mosca }

\vspace*{1pt}\textlineskip    %) USE THIS MEASUREMENT WHEN THERE IS
   %) A SECTION HEADING
%\vspace*{-0.5pt}
%\noindent
%%%%%%%%%%%%%%%%%%%%%%%%%%%%%%%%
%put the text of the paper here
%%%%%%%%%%%%%%%%%%%%%%%%%%%%%%%%

\section{Introduction}
The study of interactive proof systems began in the 1980's, and while initially only a few non-trivial proof systems were known \cite{Bab85,GMR89}, at the beginning of the 1990's it was discovered that interactive proof systems are actually extremely powerful. 

In broad terms, in an interactive proof for a language $\cL$, a computationally weak \textit{verifier} interacts with one or more  \textit{provers} which are stronger computationally. For a given input $x$, the provers claim that $x\in \cL$, but the verifier would not just take their word for it. Instead, an interactive protocol is commenced, followed by the verifier either 'accepting' the claim, or 'rejecting' it. The protocol has perfect completeness if, when $x$ is indeed in $\cL$ and the provers honestly follow the protocol, the verifier eventually accepts. The protocol has soundness-parameter $s$, if when $x$ is not in $\cL$ then the verifier rejects with probability at least $1-s$, independently of whether the provers follow the protocol or not. If a protocol for $\cL$ exists that has perfect completeness and soundness $1/2$, we say that it is an interactive protocol for $\cL$.

The celebrated $\IP=\PSPACE$ \cite{LFKN90,Sha90} result showed that any language in $\PSPACE$ has an interactive protocol with a (randomized) polynomial-time Turing machine as a verifier, and with a single prover that is computationally unbounded. The result $\MIP=\NEXP$ \cite{BFL91} that soon followed, showed that if two provers are allowed instead of one, interactive protocols exist for any language in non-deterministic exponential time. For the languages in the smaller class $\NP$, the celebrated PCP theorem  \cite{ALMSS98,AS98} showed that they can be verified by a multi-prover interactive protocol with a single round: When proving that $x\in \cL$, the verifier sends each of two provers a 'question' string of $O(\log |x|)$ length, and gets a constant number of bits from each prover. It is important to note that in all multi-prover protocols, the provers may not communicate at all while the protocol is taking place.

\paragraph{Computation delegation.} The PCP theorem had a huge impact for showing hardness-of-approximation for optimization problems. But in addition, the PCP theorem as well as other interactive proofs (and the techniques used to obtain them) seem very relevant for another practical application, namely computation delegation. The goal in computation delegation, introduced in \cite{GKR08}, is for the verifier to commission the provers to perform a computational task. The provers must then supply a result of the computation, and also prove to the verifier, via an interactive protocol, that the result that they sent is indeed correct. Of course, this would be pointless if the proof protocol would require more resources than what the verifier needs to perform the computation herself. 

For delegation, the aforementioned line of results are not useful in their original form, as they usually assume provers that are computationally unbounded. For practical applications we would like even a relatively weak honest prover to be able to follow the protocol. 

Note that a similar question, namely of interactive \textit{arguments} is studied in cryptography (see e.g. \cite{CP13}). However in arguments, it is ok if the provers can convince the verifier of a false statement, if this requires the provers to be very strong computationally. Here we want the protocol to be secure even if the provers are very strong, but maintain that honest provers do not need to be as strong. 

\paragraph{Quantum delegation.} The motivation for delegation of quantum computing is quite clear. Suppose we are given a box that we can feed with inputs, and then obtain from it outputs that are claimed to be the result of a quantum computation. How can we tell if the box actually performs the desired computation?

The holy grail of quantum computation-delegation would therefore be a protocol that verifies a polynomial-time quantum computation, by letting a 
classical polynomial-time randomized machine interact with an efficient quantum machine, namely one which is only assumed to be have the power of $\BQP$. 

There has been significant progress on interactive quantum proofs and quantum computation delegation in the last decade. For instance, it was shown that a polynomial-time classical machine can verify a polynomial-time quantum computation, if she is allowed to interact with a constant number of entangled polynomial-time quantum provers \cite{RUV13,Ji16,NV17,CGJV19,NV18}\footnote{See \cite{JNVWY20} for erratum and correction regarding \cite{NV18}.}.
Likewise, if we consider a single very restricted quantum device, namely one with a constant number of qubits, then it can verify polynomially long quantum computations \cite{ABOE10,BFK09,FK17,ABOEM17,FHM18}.

Mahadev's celebrated result \cite{Mah18} (see also \cite{GV19,ACGH19,CCY19}) provides a protocol by which a strictly classical verifier can verify quantum computation using a single quantum prover.  
However Mahadev's protocol relies on a computational-hardness assumption for soundness (a standard post-quantum cryptographic assumption). A protocol that is unconditionally sound is still unknown. 

\paragraph{Towards unconditional quantum delegation. }
As an intermediate step in the search for information-theoretic soundness for verifying polynomial quantum computation with only classical interactions, a more general question arises: which classes of problems can be verified using a single prover which is only assumed to have the computational power of the same class? 

\begin{definition}[In-class IP - informal]
	\label{def:in-class-IP}
	Let $\IPlimit{\mathcal{P}}{\mathcal{V}}$ be the set of all languages $\cL$ for which there exists an efficient interactive proof protocol between a $\mathcal{V}$-power verifier and a prover, such that a prover in class $\mathcal{P}$ can follow the protocol for an input in $\cL$, and the protocol is sound against {\em any} prover when the input is not in $\cL$.
\end{definition}

A recent attempt by Aharonov and Green \cite{AG17} provides an interactive proof protocol verifying \textit{precise} polynomial-time quantum computations ($\PreciseBQP$), showing that $\PreciseBQP\subseteq\IPlimit{\mathcal{\PreciseBQP}}{\BPP}$.  Let us remind the reader of  the definition of $\PreciseBQP$:
Recall that in normal polynomial-time quantum computation ($\BQP$), the gap between the acceptance probability of $yes$-cases and $no$-cases is inverse-polynomial\footnote{This gap can be amplified to a constant using standard gap amplification techniques.}. In the precise case, however, this gap is inverse exponential. This change in the precision gives a much stronger class. In fact, it captures the full power of $\PP$ \footnote{Note that $\PP=\PreciseBQP=\PostBQP$. } \cite{Aar05,Kup15,GSSSY18}. We note that it immediately follows from  \cite{AG17}, that the same protocol can be extended to work for languages in $\p^\PreciseBQP=\p^{\PP}=\p^{\sharpP}$. Thus it gives a  quantum analogue of \cite{LFKN90}, which showed an interactive protocol for $\p^{\#\p}$ .

\paragraph{Our contribution.} To achieve quantum delegation, we eventually would like a protocol with a $\BQP$ prover, but it is also  natural to try to extend the protocol of \cite{AG17,LFKN90} to larger 
complexity classes. We do know that such a protocol exists for $\PSPACE=\BQPSPACE$ \cite{Sha90,Wat03}, but even getting a direct quantum-inspired protocol for $\BQPSPACE$ would be interesting. 
A potential way to achieve this goal is by considering $\PreciseQMA$ (informally, a variant of $\QMA$ where the gap between the acceptance probabilities of \textit{yes} and \textit{no} cases is inverse exponential), which has been shown to be equal to $\PSPACE$ \cite{FL16,FL18}. 

In this paper, we make a step towards a direct $\BQPSPACE\subseteq\IP$ result: we show that $\PreciseQCMA$, a sub-class of $\PreciseQMA=\BQPSPACE$, can be verified by a classical interaction with a $\PreciseQCMA$ prover.
Roughly speaking, $\PreciseQCMA$ is a quantum analogue of $\NP$ where the witness is still classical and the verifier is quantum, but there might be an exponentially small gap between the acceptance probability of \textit{yes} and \textit{no} instances.

\paragraph{Paper organization.} In Section 2 we prove our main result. We explain why the same technique and other attempts do not work for $\PreciseQMA$ in Section 3. 

\section{An in-class interactive proof for $\PreciseQCMA$}

Before stating our main result formally, let us define  $\PreciseQCMA$ \cite{GSSSY18}. 

\begin{definition}[$\PreciseQCMA$ \cite{GSSSY18}]
	\label{def:PreciseQCMA}
	A language $\cL=(\cL_{yes},\cL_{no})$ is in $\PreciseQCMA(c,s)$ for polynomial-time computable functions $c,s:\mathbb{N}\mapsto[0,1]$ if there exist polynomially bounded functions $k,p,m:\bN\mapsto\bN$ such that $\forall \ell\in \bN$, $c(\ell)-s(\ell) \geq 2^{-p(\ell)}$, and there exists a polynomial-time uniform family of quantum circuits $\{V_x\}_{|x|\in\bN}$ acting on $k(|x|)+m(|x|)$ qubits that takes a classical proof $y\in\binset^{m(|x|)}$ and outputs a single qubit. Moreover, the number of working qubits of $V_x$ is $k(|x|)$, and for any binary input $x$:
	\begin{itemize}
		\item{\bf Completeness:} If $x\in \cL_{yes}$, then $\exists y$ such that $V_x$ accepts $y$ with probability at least $c(|x|)$. 
		\item{\bf Soundness:} If $x\not\in \cL_{no}$, then $\forall y$, $V_x$ accepts $y$ with probability at most $s(|x|)$. 
	\end{itemize}

	Furthermore, $\PreciseQCMA := \cup_{c,s \geq \exp(-\poly(|x|))} \PreciseQCMA(c,s)$. 
\end{definition}

We can now state our main theorem formally. 

\begin{theorem}
	\label{thm:PreciseQCMA-in-IP}
	$\PreciseQCMA \subseteq \IPlimit{PreciseQCMA}{BPP}$.
\end{theorem}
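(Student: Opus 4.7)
The plan is to reduce the problem to verifying the acceptance probability of a fixed quantum circuit, and then to invoke the in-class protocol of \cite{AG17} for $\PreciseBQP$ essentially as a black box. The protocol has two phases. In Phase~1 the prover sends the verifier a classical witness $y \in \binset^{m(|x|)}$. An honest $\PreciseQCMA$-powered prover can produce such a $y$ by a standard bit-by-bit self-reduction: the decision problem ``does there exist a witness extending a given prefix $p$ such that $V_x$ accepts with probability at least $c$?'' is itself in $\PreciseQCMA$, so $y$ can be extracted one bit at a time with $m(|x|)$ queries to a $\PreciseQCMA$ oracle.

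In Phase~2 the verifier hard-codes the received $y$ into $V_x$ to obtain a polynomial-size quantum circuit $W_{x,y}$ on $k(|x|)$ qubits, and the parties execute the $\PreciseBQP$ in-class protocol of \cite{AG17} to verify that $\Pr[W_{x,y}\text{ accepts}] \geq c(|x|)$. Since $\PreciseBQP \subseteq \PreciseQCMA$ (use a trivial witness), the assumed prover has enough computational power to simulate the AG17 prover. Completeness is then immediate: if $x \in \cL_{yes}$, the honest prover sends a good $y$ for which the acceptance probability is at least $c$, and AG17 accepts. For soundness, I would use the defining gap $c - s \geq 2^{-p(|x|)}$: if $x \in \cL_{no}$, then regardless of which $y'$ the cheating prover sends in Phase~1, $\Pr[W_{x,y'}\text{ accepts}] \leq s$, so AG17 is being invoked on a false claim separated by an inverse-exponential gap from the truth, and its soundness guarantee rejects with probability at least $1/2$ (amplifiable by parallel repetition).

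I do not expect a serious technical obstacle; the main thing to pin down is that the AG17 subprotocol, which was designed to decide $\PreciseBQP$ languages, can be invoked at the required inverse-exponential precision and that its soundness holds against \emph{arbitrary} cheating strategies --- including adversarial choices of the Phase~1 message $y'$. The former is exactly the guarantee of \cite{AG17}; the latter follows by observing that once $y'$ is fixed, the residual interaction is an instance of the AG17 protocol on the concrete input circuit $W_{x,y'}$, so soundness against any subsequent prover strategy is inherited directly, and a union bound over $y'$ is not even needed since the bad event is defined per transcript.
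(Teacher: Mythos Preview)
Your high-level plan matches the paper's exactly, but you have glossed over the one step that constitutes the paper's actual technical contribution. You assert that ``the decision problem `does there exist a witness extending a given prefix $p$ such that $V_x$ accepts with probability at least $c$?' is itself in $\PreciseQCMA$.'' This is precisely what requires work, and the paper devotes its main lemma (Lemma~\ref{lemma:finding-witness}) and Proposition~\ref{prop:good-verifier-exists} to it.

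The issue is this: the natural verifier for the prefix language takes $(x,p)$ and a candidate suffix $w_1$ and runs $V_{\cL}$ on $(x,\,p\circ w_1)$. If $x\in\cL_{no}$, soundness $s$ is inherited. But if $x\in\cL_{yes}$ and $p$ happens \emph{not} to be a prefix of any valid witness, there may still be suffixes $w_1$ for which $V_{\cL}$ accepts with probability arbitrarily close to $c$ --- say $c-2^{-2^n}$. Such instances are NO instances of the prefix language, yet the natural verifier has no inverse-exponential gap between them and the YES instances. The prefix problem therefore fails the $\PreciseQCMA$ promise, and your bit-by-bit search cannot be carried out by a $\PreciseQCMA$ oracle as stated.

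The paper's fix is to first replace $V_{\cL}$ by a verifier $V^*_{\cL}$ built only from Hadamard and Toffoli gates (via Solovay--Kitaev and \cite{Shi02,Aha03}). For such a circuit every acceptance probability is a dyadic rational $\alpha/2^t$ with $t=\poly(n)$, so the set of possible acceptance probabilities is $2^{-t}$-separated. One can then redefine $c^*,s^*$ so that for \emph{every} pair $(x,w)$ the acceptance probability is either $\ge c^*$ or $\le s^*$, which restores the promise gap for the prefix language and makes your adaptive search legitimate. This discretization argument is the missing idea; once you supply it, the rest of your proposal is correct and coincides with the paper's proof.
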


Noting that $\PreciseQCMA$ is equivalent to the classical complexity class $\NP^{\PP}$\cite{MN17,GSSSY18}, we get the  following immediate corollary. 

\begin{corollary}
	\label{corr:NP^PP-in-IP}
	$\NP^{\PP} \subseteq \IPlimit{\NP^{\PP}}{BPP}$.
\end{corollary}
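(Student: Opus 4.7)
The plan is to derive the corollary directly from Theorem \ref{thm:PreciseQCMA-in-IP} using the known class equality $\PreciseQCMA = \NP^{\PP}$ of \cite{MN17,GSSSY18}. Given a language $L \in \NP^{\PP}$, the forward direction of that equivalence immediately places $L$ in $\PreciseQCMA$, so Theorem \ref{thm:PreciseQCMA-in-IP} yields a $\BPP$-verifier interactive protocol for $L$ whose honest prover is a $\PreciseQCMA$ machine. The only substantive work is then to re-cast this honest prover as an $\NP^{\PP}$ machine; soundness against arbitrary (unbounded) provers carries over unchanged from the theorem.

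For the re-casting step I would invoke the reverse direction of $\PreciseQCMA = \NP^{\PP}$. Conceptually, the honest prover's workload in the theorem has two ingredients: (i) at the outset it must identify a classical witness $y$ making the $\PreciseQCMA$ verifier $V_x$ accept with the promised inverse-exponential gap, and (ii) during the interaction it must report polynomially many exact quantities depending only on $V_x$ and $y$ --- in an Aharonov-Green-style sum-check these are partial sums of amplitudes or acceptance probabilities of a fixed circuit on a fixed classical input, each computable by a $\PP$ (equivalently $\sharpP$) oracle call. Part (i) is a single nondeterministic guess, and (ii) is polynomially many $\PP$ queries given $y$, so the entire next-message function of the honest prover sits inside $\NP^{\PP}$. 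Combined with the $\BPP$ verifier supplied by the theorem, this gives $L \in \IPlimit{\NP^{\PP}}{\BPP}$.

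The point I expect to be the main obstacle, and the one I would examine most carefully, is confirming that the prover strategy guaranteed by Theorem \ref{thm:PreciseQCMA-in-IP} really does cleanly decompose as ``one nondeterministic guess followed by $\PP$-type counting,'' rather than interleaving the two in a way that would cost something like $\NP^{\PP^{\NP}}$ or a higher level of the counting hierarchy. Because the classical witness $y$ for $V_x$ is fixed once and for all at the start of the protocol and every subsequent honest message is an evaluation of a $\PP$-computable function of $(V_x,y)$ and the verifier's previous challenges, the $\NP$ and $\PP$ layers separate cleanly, and the corollary follows from Theorem \ref{thm:PreciseQCMA-in-IP} together with \cite{MN17,GSSSY18}.
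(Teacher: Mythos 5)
Your proposal is correct and follows the same route as the paper, which obtains the corollary immediately from Theorem \ref{thm:PreciseQCMA-in-IP} together with the equivalence $\PreciseQCMA = \NP^{\PP}$ of \cite{MN17,GSSSY18}. Your extra discussion of how the honest prover decomposes into a witness guess plus $\PP$-computable messages is a reasonable elaboration of what the paper leaves implicit (there the witness-finding is handled by the adaptive-search Lemma \ref{lemma:finding-witness}), but it is not a different argument.
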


This result is an improvement on \cite{LFKN90,AG17}, and was not known before. 

\subsection{Proof of Theorem \ref{thm:PreciseQCMA-in-IP}}

\paragraph{An in-class interactive proof for $\PreciseBQP$. } 
Using the terminology of Definition \ref{def:in-class-IP}, the protocol in \cite{AG17} shows that 
	$$\PreciseBQP\subseteq\IPlimit{\PreciseBQP}{\BPP}.$$
Namely, they show a protocol by which a polynomial probabilistic classical machine can verify the acceptance probability of a polynomial quantum circuit\footnote{I.e. a quantum circuit consisting of a polynomial number of sequential local gates on $n$ qubits.}, to within inverse-exponential accuracy, by interacting with a $\PreciseBQP$ prover.

\paragraph{First step towards an in-class interactive proof for $\PreciseQCMA$. }
In order to verify whether $x$ is a \textit{yes} or \textit{no} instance of a language $\cL\in\PreciseQCMA$, one's first attempt would be to have the verifier ask the prover for a witness $w$ for $x$, and then use an in-class interactive proof for $\PreciseBQP$, such as the AG protocol, to verify the acceptance probability of $w$. 
Since $\cL\in\PreciseQCMA$ we indeed know that if $x\in \cL$ then there really is a classical witness that could be sent to the verifier, and that given the witness, the process of verifying it is a computation in the class $\PreciseBQP$, which can be verified via the AG protocol. 

If we assume that the prover can send the witness, this would complete the proof of Theorem \ref{thm:PreciseQCMA-in-IP}. 
However there is a caveat: while we allow the prover to be a $\PreciseQCMA$ machine, this only means that for a language $\cL\in\PreciseQCMA$ and an instance $x$ of $\cL$, the prover can distinguish whether $x$ is a \textit{yes} or a \textit{no} instance. But this does not a-priori mean that the prover can actually find the witness, even if it knows that $x\in \cL$. 

\paragraph{The issue with adaptive search. } 
For languages in $\NP$, we know how to utilize an $\NP$ oracle in a witness-finding algorithm: we use consecutive accesses to the oracle to adaptively find one bit of the witness at a time. For example, to find the first bit of the witness the verifier asks the prover if the following $\NP$ claim, denoted  $S_0$, is true: "there exists a valid witness for the instance $x$ where the first bit is $0$". If the answer is "no", the verifier can ask about the statement $S_1$, where the value of the bit is flipped. Once the first bit $b$ of the witness is found in this way, the verifier can continue to find the second bit by asking about the statements $S_{b0}$, etc. . 

The previous process indeed works for any instance of any language in $\NP\subseteq \PreciseQCMA$, but does it extend to all of $\PreciseQCMA$? Let $\cL$ be a language in $\PreciseQCMA$ and $x\in \cL$ be a \textit{yes} input. Consider e.g. a statement $S_{0}$ for $x$ as described above. Is this really a $\PreciseQCMA$ statement? This turns out to be somewhat subtle. The problem stems from the following situation: suppose $V_\cL$ is a $\PreciseQCMA$ verifier for $\cL$ with completeness $c$, $x$ is in $\cL$, but there are no 'valid' witnesses for $x$ that begin with $0$. In this case $S_{0}$ is false, but there might still be witnesses that begin with the $0$ bit and make $V_\cL$ accept with a probability that is arbitrarily close to $c$, and thus cannot be distinguished from an actually valid witnesses by a $\PreciseQCMA$ oracle. In other words - the adaptive queries to the oracle do not necessarily follow the $\PreciseQCMA$ promise. 

In Lemma \ref{lemma:finding-witness} below, we explain this problem in detail and specify a way for $\PreciseQCMA$ oracle to still find a correct classical witness for an instance $x$ that belongs to a $\PreciseQCMA$ language $\cL$. 

\begin{lemma}[adaptive search for a $\PreciseQCMA$ witness]
	\label{lemma:finding-witness}
	
	Given a language $\cL=(\cL_{yes},\cL_{no})\in\PreciseQCMA$, and access to a $\PreciseQCMA$ oracle, there is a classical efficient algorithm that finds a witness for any instance $x \in \cL_{yes}$. 
\end{lemma}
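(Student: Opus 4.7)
The plan is to exploit the discrete structure of acceptance probabilities of polynomial-size quantum circuits. For a standard universal gate set (e.g.~$\{H, T, \CNOT\}$), the acceptance probability of the $\PreciseQCMA$ verifier $V_x$ on any classical witness $y$ is a dyadic rational of the form $k/2^{q(|x|)}$ for some fixed polynomial $q$, so any two distinct acceptance probabilities differ by at least $2^{-q(|x|)}$. This inverse-exponential gap is precisely what I use to keep every oracle query inside the $\PreciseQCMA$ promise.

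Introduce the auxiliary language $L_{\max} := \{(V, k) : \exists y,\ \Pr[V \text{ accepts } y] \geq k/2^{q(|x|)}\}$, whose no-instances are those $(V,k)$ for which every $y$ has acceptance probability at most $k/2^{q(|x|)} - 2^{-q(|x|)}$. By the discretization above, every pair $(V,k)$ is either a yes-instance or a no-instance, so $L_{\max} \in \PreciseQCMA$, and in particular a $\PreciseQCMA$ oracle can decide it.

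The algorithm then proceeds in two phases. In Phase~1 I compute $c^* := \max_y \Pr[V_x \text{ accepts } y]$ exactly, by binary-searching over $k \in \{0, 1, \ldots, 2^{q(|x|)}\}$ and querying $L_{\max}$; this costs $q(|x|) = \poly(|x|)$ queries, and since $x \in \cL_{yes}$ one has $c^* \geq c(|x|)$. In Phase~2 I build the witness bit by bit: maintaining a prefix $p$, I construct the partial circuit $V_x^{p}$ that hard-codes $p$ and takes the remaining bits $y'$ as input, so that $\Pr[V_x^{p} \text{ accepts } y'] = \Pr[V_x \text{ accepts } py']$. I then query the oracle on $(V_x^{p0},\ c^* \cdot 2^{q(|x|)})$: if the answer is ``yes'' I extend $p$ by $0$, otherwise by $1$. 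A straightforward induction preserves the invariant that the current prefix $p$ can still be extended to a witness attaining probability exactly $c^*$, so the process outputs a valid classical witness after $m(|x|)$ iterations.

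The main obstacle, and the entire point of the proof, is the subtlety flagged in the informal discussion preceding the lemma: showing that every query is a legitimate $\PreciseQCMA$ instance. The discretization trick handles this cleanly, since the acceptance probabilities of $V_x^{p}$ form a subset of those of $V_x$ and therefore lie on the same grid $\{k/2^{q(|x|)}\}$. Hence for any prefix $p$, either some extension attains $\geq c^*$ (a yes-instance of $L_{\max}$) or every extension attains $\leq c^* - 2^{-q(|x|)}$ (a no-instance), so the $\PreciseQCMA$ promise is satisfied at every step; this rules out exactly the pathology where intermediate acceptance probabilities crowd arbitrarily close to the target without reaching it.
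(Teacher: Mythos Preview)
Your overall strategy---discretize the acceptance probabilities so that distinct values are $2^{-\poly}$-separated, then do adaptive bit-by-bit search---is exactly the paper's. However, your discretization claim is false for the gate set you name. The $T$ gate is $\mathrm{diag}(1,e^{i\pi/4})$, and its irrational phase leaks into acceptance probabilities: for instance, applying $HTHT$ to $\ket{1}$ and measuring the first qubit gives probability $(2-\sqrt{2})/4$, which is not dyadic. More generally, with $\{H,T,\CNOT\}$ the probabilities land in $\mathbb{Z}[\sqrt{2}]\cdot 2^{-t}$ rather than $\mathbb{Z}\cdot 2^{-t}$. This breaks both phases of your algorithm as written: your binary search over integers $k$ cannot recover $c^*$ ``exactly'', and the Phase~2 invariant ``some extension attains exactly $c^*$'' is not something you can test with dyadic thresholds.

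The paper handles this by first replacing the given verifier with one built only from Hadamard and Toffoli gates, via Solovay--Kitaev approximation to within $\varepsilon'/4$ of the original; since both of those gates have real entries in $\{0,\pm 1,\pm 1/\sqrt{2}\}$, the resulting acceptance probabilities are genuinely of the form $\alpha/2^t$ with integer $\alpha$, and the paper proves this by a short induction on the circuit. With that correction in place, the rest of your argument would work. (Incidentally, the paper does not bother computing $c^*$ by binary search: once probabilities are dyadic it simply sets $c^* := \min\{\alpha/2^t : \alpha/2^t \ge c\}$ and $s^* := c^* - 2^{-t}$, and observes that the prefix-extension language is then a bona fide $\PreciseQCMA$ promise problem.)
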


This completes the proof of Theorem \ref{thm:PreciseQCMA-in-IP}. 
\hfill
$\square$

\subsection{Proof of Lemma \ref{lemma:finding-witness}}

\begin{proof}
 The protocol simply consists of iteratively asking the oracle for the witness bit by bit as described above. To show that a $\PreciseQCMA$ oracle can indeed answer with the correct bits for such an adaptive search, we first formally define a language $\cL'$ associated with the problem of "finding the next bit of the witness" on the $i$-th round. 
Let $V_{\cL}$ be a $\PreciseQCMA$ verifier for $\cL$ with completeness $c$ and soundness $s$ (and assume $c-s\geq \exp(-\poly(|x|))$). Let $\cL':=\cL'_{V_{\cL}}$ be the language of pairs $\pair{x}{w_0}$ as follows: $\cL':=\{(x,w_0) | \exists w=w_0\circ w_1 \text{ s.t. }$ $\Pr[V_{\cL}\text{ accepts }(x,w_0\circ w_1)]$ $\geq c \}$. That is, $\cL'$ consists of \textit{yes} instances of $\cL$, paired with all the prefixes of correct witnesses associated with them. 

Given access to an oracle for $\cL'$ and a string $x$, it is easy to see that the previous adaptive search algorithm can be efficiently used to find a witness for $x$ if one exists. So in order to prove Lemma \ref{lemma:finding-witness}, it remains to show that $\cL'$ is in $\PreciseQCMA$, meaning the oracle associated with $\cL'$ is indeed a $\PreciseQCMA$ oracle\footnote{We note that in order to regard the language $\cL'$ as a promise problem, we implicitly considered  $\cL'_{no}$ as the complement of $\cL'$. That is, $\cL'_{no}:=\{(x,w_0) | \forall w=w_0\circ w_1$ : $\Pr[V_{\cL}\text{ accepts }(x,w_0\circ w_1)]$ $< c \}$.}. 

 At first glance, it seems straightforward to construct a $(c,s)$-$\PreciseQCMA$ verifier $V_{\cL'}$ where $c-s\geq \exp(-\poly(n))$ for $\cL'$: as we already have $V_{\cL}$ and given an instance $(x, w_0)$ and a witness $w_1$, $V_{\cL'}$ would just run $V_{\cL}$ on the instance $x$ with witness $w=w_0\circ w_1$. 
It is easy to see that $\forall x \in \cL_{no}$ and any $w_0$, $V_{\cL'}$ will accept $(x,w_0)$ with probability at most $s$, since no correct witness exists; also $\forall x \in \cL_{yes}$, $V_{\cL'}$ will accept $(x,w_0)$ for every $w_0$ such that $w_0$ is a prefix of a valid witness $w_0\circ w_1$ with probability at least $c$ if given $w_1$ as a witness. 

However, what about the case when $V_{\cL'}$ is given $(x,w_0)$ where $x \in \cL$ but $w_0$ is not a prefix of any correct witness? 
In that case, the verifier may seemingly accept with probability smaller than but \textit{arbitrarily} close to $c$ although $(x,w_0)$ is not in $\cL'$ - this would mean that $V_{\cL'}$ does not have a proper separation between its completeness and soundness, and so it is not sufficient for proving that $\cL' \in \PreciseQCMA$. 

Fortunately, we can show that there exists a verifier $V_{\cL}$ for $\cL$ such that the probability with which it accepts \textit{any} input-witness pair $(x,w)$ is either at least some parameter $c$, or bounded from above by $s$, where $c-s \geq \exp(-\poly(|x|))$. Such a verifier will solve the previous issue and give us a $\PreciseQCMA$ verifier for $\cL'$. So it is only left to show the existence of such a verifier. This is done in Proposition \ref{prop:good-verifier-exists} below using similar techniques to those of \cite{JKNN12}, and completes the proof. 

\begin{proposition}
	\label{prop:good-verifier-exists}
    Let $\cL$ be a $\PreciseQCMA$ language. Then there exists a $\PreciseQCMA$ verifier $V^*_{\cL}$  for $\cL$ with completeness $c$ and soundness $s$ such that:
    \begin{enumerate}[(1)]
        \item $|c-s| \geq 2^{-\poly(n)}$, where $n$ is the length of the input $x\in \cL$. 
        \item For any $x\in\cL$ and \textit{any} classical witness $w$, either $\Pr[V^*_{\cL}\text{ accepts } (x,w)] \geq c$\\ or $\Pr[V^*_{\cL} \text{ accepts } (x,w)] \leq s$. 
    \end{enumerate}
\end{proposition}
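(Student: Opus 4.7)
The plan is to force the acceptance probability of the new verifier, on every pair $(x,w)$, to lie on a fixed discrete grid of spacing $2^{-\poly(n)}$, and then to choose the completeness and soundness thresholds $c,s$ to fall strictly between two consecutive grid points. Property (2) is then automatic, and property (1) holds because $c-s$ equals exactly one grid spacing.

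Step 1 (amplification). Starting from $V_\cL$ with parameters $(c_0,s_0)$ satisfying $c_0-s_0\geq 2^{-\poly(n)}$, I would apply standard Chernoff-based parallel repetition with threshold acceptance: run $V_\cL$ on the same classical witness $N=\poly(n)$ times in parallel and accept iff the fraction of accepting trials exceeds $(c_0+s_0)/2$. This yields an amplified verifier $V'_\cL$ with completeness $c_0'\geq 1-2^{-r(n)}$ and soundness $s_0'\leq 2^{-r(n)}$, for any polynomial $r$ of our choice.

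Step 2 (discretization, in the spirit of \cite{JKNN12}). I would realize $V'_\cL$ in the gate set $\{H,\text{Toffoli}\}$, which is universal for quantum computation and which one is free to fix as the gate set underlying the definition of $\PreciseQCMA$. Let $\tilde V_\cL$ denote the compiled circuit, and let $T=\poly(n)$ be its total number of Hadamard gates. A direct induction on the circuit shows that on any classical pair $(x,w)$ every amplitude of the produced state is of the form $a/\sqrt{2^T}$ with $a\in\mathbb Z$, and therefore the acceptance probability is exactly of the form $k/M$ with $M:=2^T$ and $k\in\{0,\dots,M\}$. Thus, as $w$ ranges over all classical witnesses, the possible acceptance probabilities lie on the discrete grid $\{0,1/M,\dots,1\}$.

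Step 3 (threshold choice). I would set $V^*_\cL:=\tilde V_\cL$ and declare
\[
c := \lceil c_0'\cdot M\rceil/M,\qquad s := c - 1/M.
\]
Bimodality follows immediately from Step 2: every witness's acceptance probability is a grid point, hence either $\geq c$ or $\leq c-1/M = s$. Completeness is preserved because the original accepting witness has acceptance at a grid point $\geq c_0'$, hence $\geq c$. Soundness requires $s_0'\leq s$; since $c-s_0'\geq c_0'-s_0'\geq 1-2\cdot 2^{-r(n)}$, choosing $r$ polynomially larger than $T$ gives $s_0'\leq s = c-1/M$. Finally $c-s=1/M=2^{-\poly(n)}$, giving property (1).

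The main obstacle is Step 2: the compilation must be \emph{exact}, since any Solovay--Kitaev style approximation would perturb amplitudes off the grid and destroy the discretization. Following \cite{JKNN12}, this is handled by taking $\{H,\text{Toffoli}\}$ (or any fixed universal gate set whose amplitudes lie in an explicit algebraic extension of $\mathbb Q$) as the defining gate set of $\PreciseQCMA$ from the outset; the desired discreteness then holds tautologically, and both the amplification of Step 1 and any further bookkeeping can be carried out inside the same gate set. A minor subtlety is to verify that Step 1 itself can be implemented within this gate set, which is routine since majority-of-$N$ is a classical reversible computation expressible with Toffoli gates on ancillas.
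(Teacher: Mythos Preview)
Your Step~1 is wrong for $\PreciseQCMA$. Chernoff/Hoeffding majority amplification with $N$ repetitions drives the failure probability only to $e^{-\Theta(N(c_0-s_0)^2)}$, so when $c_0-s_0=2^{-p(n)}$ you need $N=\Omega\bigl(r(n)\,2^{2p(n)}\bigr)$ repetitions to reach error $2^{-r(n)}$. With $N=\poly(n)$ you get essentially no amplification; this is exactly what separates the ``precise'' classes from their ordinary counterparts.

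Fortunately, Step~1 is also unnecessary. Once Step~2 gives you the grid property, Step~3 already works with the \emph{original} thresholds: set $c:=\lceil c_0 M\rceil/M$ and $s:=c-1/M$. Completeness holds as you argued. For soundness, any $(x,w)$ with $x\notin\cL$ has acceptance probability $p\leq s_0<c_0\leq c$; since $p$ is itself a grid point strictly below the grid point $c$, it must satisfy $p\leq c-1/M=s$. No amplification is needed, and you never have to compare $s_0$ to $1/M$.

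Your worry about Solovay--Kitaev is also misplaced, and the paper in fact takes the route you reject. It starts from an arbitrary verifier and \emph{approximates} it by an $\{H,\text{Toffoli}\}$ circuit $G^*$ to additive error $\varepsilon=(c_0-s_0)/4$, using $t=l\cdot\poly\log(1/\varepsilon)=\poly(n)$ gates. The grid property pertains to $G^*$ itself (it is built from $H$ and Toffoli), not to the circuit it approximates; nothing is ``perturbed off the grid''. The approximation merely shifts the effective thresholds to $c_0-\varepsilon$ and $s_0+\varepsilon$, still with inverse-exponential gap, after which the discretization argument and threshold choice go through exactly as in your Steps~2--3. Your alternative of fixing the gate set from the outset also works, but it is not required.
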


\begin{proof}
    We start with any $\PreciseQCMA$ verifier $V_{\cL}$ for $\cL$. Given an input $x$ of size $n$, $V_{\cL}$ uses a circuit $G_{\cL,n}$ which consists of $l=\poly(n)$ (local) gates, with soundness and completeness parameters $s'$ and $c'$ s.t $\varepsilon' := s'-c'\geq 2^{-\poly(n)}$. The idea is to simulate $G_{\cL,n}$ by another circuit $G^*_{\cL,n}$ whose size is polynomial in the size of $G_{\cL,n}$, but uses only Hadamard and Toffoli gates. We will show that $G^*_{\cL,n}$ inevitably has the desired properties, and then simply define $V^*_{\cL}$ as using the circuit $G^*_{\cL,n}$ for inputs of size $n$. 
    
    By \cite{Shi02,Aha03} and \cite{KSV02}, $G_{\cL,n}$ can be approximated by a circuit $G^*_{\cL,n}$ such that the acceptance probabilities of $G^*_{\cL,n}$ and $G_{\cL,n}$ differ by at most $\varepsilon=\varepsilon'/4=2^{-\poly(l)}$ for any input and which uses only $t=l\cdot \poly\log(1/\varepsilon)$ Hadamard and Toffoli gates. Note that the size of $G^*_{\cL}$ is $\poly(l)$. Hence for any $x\in\cL$, there exists a witness $w$ such that $V^*_{\cL}$ accepts $(x,w)$ with probability at least $c=c'-\varepsilon$ and $\forall x\notin\cL$ and $\forall w$, $\Pr[G^*_{\cL,n}\ket{x}\ket{w} = \ket{\text{Accept}}] \leq s'+\varepsilon =: s$, and $c-s\geq \varepsilon'/2$. 
    
    \paragraph{Acceptance probabilities are dyadic numbers.} Next we show that there is only a discrete set of values that $\Pr[G^*_{\cL,n} \ket{x}\ket{w}=\ket{\text{Accept}}]$ can obtain, and that these values are inverse-exponentially separated. To see this, we look at the state $\sum_{y\in\{0,1\}^N} a_y \ket{y}$ which is the result of operating $G^*_{\cL,n}$ on a given pair $(x,w)$, prior to its measurement\footnote{Where $N=\abs{x}+\abs{w}$ is the total number of qubits the circuit acts on. We note that we did not mention an additional ancilla register, as it can be thought of as being part of the witness register $w$.}. W.l.o.g.,
    \[\Pr[G^*_{\cL,n}\ket{x}\ket{w}=\ket{\text{Accept}}] = \sum_{y:y_1=0} |a_y|^2.\] 
    
    \begin{claim}
    \label{claim:acc-prob-form}
    Each number $|a_y|^2$ is of the form $|a_y|^2=\frac{\alpha}{2^t}$ where $\alpha$ is integer valued, and $t$ is the number of gates in $G^*_{\cL,n}$. 
    \end{claim}

    \begin{proof}
     To see this, consider the state $(G^*_{\cL,n})^{(i)}\ket{x}\ket{w} = \sum_{y\in\{0,1\}^{N}} a^{(i)}_y \ket{y}$, which is obtained after applying the first $i$ gates of $G^*_{\cL,n}$ to $\ket{x}\ket{w}$. Furthermore, for each $y \in \{0,1\}^{N}$, and every $k\in[N]$, we define $y^k$ to be the same as $y$, except for a bit-flip in the $k$'th coordinate. And last, let $h_i$ be the number of Hadamard gates within the first $i$ gates of $G^*_{\cL,n}$. Now, a simple induction shows that for all $y \in \{0,1\}^{N}$, $a_y^{(i)}$ is of the form $a_y^{(i)} = \frac{\alpha_y^{(i)}}{2^{h_i/2}}$ where $\alpha_y^{(i)}$ is integer valued. 
     It certainly holds when $i=0$, as $(G^*_{\cL,n})^{(0)} \ket{x}\ket{w} = \ket{x,w}$. Going from $i-1$ to $i$, note that if the $i$-th gate of $G^*_{\cL,n}$ is a Toffoli, it does not affect the coefficient of the state except for reordering, and so the induction hypothesis still holds. If the $i$-th gate is Hadamard on the $k$-th coordinate, then all the resulting coefficients $a^{(i)}_y$ are of the form  
    \[
        a^{(i)}_y = \frac{1}{\sqrt{2}} \left(a^{(i-1)}_{y^k} \pm a^{(i-1)}_{y}\right). 
    \]
    And using the induction hypothesis:
    \[
        a^{(i)}_y = \frac{1}{\sqrt{2}}
        \left(\frac{\alpha^{(i-1)}_{y^k}}{2^{h_{i-1}/2}}
        \pm \frac{\alpha^{(i-1)}_{y}}{2^{h_{i-1}/2}} \right) 
        = \frac{\alpha^{(i-1)}_{y^k} \pm \alpha^{(i-1)}_{y}}{2^{(h_{i-1}+1)/2}} = \frac{\alpha^{(i)}_{y}}{2^{h_{i}/2}} 
    \]
    Where in the last transition we used the fact that $h_{i-1} + 1 = h_i$ because the $i$'th gate is Hadamard, and define $\alpha^{(i
    )}_{y} = \alpha^{(i-1)}_{y^k} \pm \alpha^{(i-1)}_{y}$ which is an integer as a sum of integers.  This concludes the induction, from which Claim \ref{claim:acc-prob-form} follows trivially.

\end{proof}

    As a corollary, by simply summing the probabilities of every standard basis element which has $0$ for its first qubit, we obtain $\forall x,w$, $|x|=n$, $\Pr[G^*_{\cL,n} \ket{x}\ket{w} = \ket{\text{Accept}}]$ is $\alpha/2^t$, for $t$ as defined earlier in Proposition \ref{prop:good-verifier-exists} proof and an integer valued $\alpha$.  
    
    \paragraph{Acceptance probabilities are exponentially separated. } We have shown that all acceptance probabilities of $V^*_{\cL}$ for a given input size are of the form $\alpha/2^t$. Let us denote $c^* := \min\{\frac{\alpha}{2^t}|\frac{\alpha}{2^t} \geq c\}$, and let $s^* := c^* - \frac{1}{2^t}$. It is obvious from the above discussion that 
    \begin{enumerate}[(1)]
        \item $\forall x \in \cL$, $\exists w$ s.t. $\Pr[V^*_{\cL} \ket{x}\ket{w} = \ket{\text{Accept}}] \geq c^*$. 
        \item $\forall x \in \bar{\cL}$, $\forall w$, $\Pr[V^*_{\cL} \ket{x}\ket{w} = \ket{\text{Accept}}] \leq s^*$. 
        \item $\forall x$, $\forall w$, if $\Pr[V^*_{\cL} \ket{x}\ket{w} = \ket{\text{Accept}}] < c^*$, then $\Pr[V^*_{\cL} \ket{x}\ket{w} = \ket{\text{Accept}}] \leq s^*$. 
    \end{enumerate}
    
    To finish the proof of Proposition \ref{prop:good-verifier-exists}, it is therefore only left to note that $c^*$ and $s^*$ are separated by some inverse exponential function of $n$. 
    \end{proof}

Proposition \ref{prop:good-verifier-exists} completes the proof of Lemma \ref{lemma:finding-witness}, and with that we are finished.

\end{proof}

\section{Discussion and Open Problems}

In the previous sections, we showed an in-class interactive proof for $\PreciseQCMA$ (Theorem \ref{thm:PreciseQCMA-in-IP}). 
One could ask if this protocol may be extended to $\PreciseQMA$. However, while in the case of  $\PreciseQCMA$ the prover could send the witness to the verifier, a description of a \textit{quantum} witness of polynomial length is, in general, exponentially long.  We could, instead, consider replacing the classical channel by a quantum one. But unfortunately even then, it is not clear how to obtain the required exponential precision from such a protocol without repeating it exponentially many times.
\vspace{8pt}

Besides $\PreciseQCMA$, the other natural sub-class of $\PreciseQMA$ is $\QMA$, where \textit{yes} instances and \textit{no} instances are separated by a constant, rather than exponentially small, probability gap. 
Below we state an interactive protocol for $\QMA$ and explain why a naive approach fails in extending it to $\PreciseQMA$:

\begin{proposition}
	\label{prop:QMA-in-IP}
	$\QMA \subseteq \IPlimit{\PreciseBQP}{\BPP}$.
\end{proposition}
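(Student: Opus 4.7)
The plan is to reduce any $\cL \in \QMA$ to a witness-free $\PreciseBQP$ decision problem on the input $x$, and then feed the resulting circuit into the AG protocol of \cite{AG17}. The key enabling fact is the standard containment $\QMA \subseteq \PP = \PreciseBQP$, which effectively absorbs the quantum witness into the verifier.

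First, given a $\QMA$ verifier $V_x$ for $\cL$ with accepting projector $\Pi_{\text{acc}}$, the maximum acceptance probability over all quantum witnesses equals the top eigenvalue $\lambda_{\max}(M_x)$ of the positive semidefinite operator on the witness register defined by $M_x := \bra{0^m}_{\text{anc}} V_x^{\dagger} (\Pi_{\text{acc}} \otimes I_{\text{wit}}) V_x \ket{0^m}_{\text{anc}}$. Because $\Tr(M_x^k)$ is sandwiched between $\lambda_{\max}(M_x)^k$ and $D \cdot \lambda_{\max}(M_x)^k$, where $D$ is the dimension of the witness register, a choice of $k$ polynomial in $|x|$ makes $\Tr(M_x^k)$ separate yes and no instances by an inverse-exponential margin. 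This trace is a $\sharpP$ quantity and is therefore computable in $\PP = \PreciseBQP$. The outcome is a uniform polynomial-size family $\{W_x\}$ of witness-free quantum circuits whose acceptance probability exceeds $1/2$ by at least $2^{-q(|x|)}$ on yes instances of $\cL$ and falls below $1/2$ by the same margin on no instances.

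Second, I would hand $W_x$ to the AG protocol, which already verifies the acceptance probability of any polynomial-size quantum circuit to inverse-exponential precision using a $\BPP$ verifier interacting with a $\PreciseBQP$ prover. Because $W_x$ is witness-free, the honest prover here genuinely stays within $\PreciseBQP$ as required, completing the simulation.

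There is no substantive technical obstacle --- the proposition is essentially a two-step composition of known facts, and its value is contrastive. Exactly the same strategy fails for $\PreciseQMA$ at the very first step: since $\PreciseQMA = \PSPACE$ is believed not to be contained in $\PP = \PreciseBQP$, no analogous witness-absorbing reduction from $\PreciseQMA$ to $\PreciseBQP$ is available, which is precisely the obstruction pointed out at the beginning of this section.
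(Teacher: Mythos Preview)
Your proof is correct, but it takes a somewhat different route from the paper's. You invoke the black-box containment $\QMA \subseteq \PP = \PreciseBQP$ via the trace-of-powers argument $\lambda_{\max}(M_x)^k \leq \Tr(M_x^k) \leq D\,\lambda_{\max}(M_x)^k$, and then hand the resulting $\PreciseBQP$ instance to the AG protocol. The paper instead builds the witness-free circuit explicitly: starting from the $\QMA$ verifier $V_\cL$, it applies witness-preserving gap amplification \cite{MW05,NWZ09} to push completeness and soundness to $1-2^{-(m+2)}$ and $2^{-(m+2)}$ (with $m$ the witness length), and then plugs in the maximally mixed state $\rho = 2^{-m} I_m$ as the witness, obtaining acceptance probability above $2^{-(m+1)}$ on yes-instances and below $2^{-(m+2)}$ on no-instances. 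Both constructions land at the same place --- a $\PreciseBQP$ decision that the AG protocol can certify --- and in fact your $\Tr(M_x^k)$ computation and the Marriott--Watrous rewind are closely related under the hood. The difference is mainly in presentation: your argument is more modular, while the paper's is self-contained and makes the $\PreciseQMA$ obstruction concrete at the circuit level (amplifying an inverse-exponential gap via \cite{MW05,NWZ09} would need exponentially many repetitions, blowing up the circuit size), whereas your version of the same obstruction is phrased at the class level ($\PreciseQMA = \PSPACE$ is not believed to lie in $\PP$).
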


The main idea in showing a simple protocol which proves Proposition~\ref{prop:QMA-in-IP} is using the witness-preserving gap amplification \cite{MW05,NWZ09} on the verification circuits for languages in $\QMA$. 

More accurately, given a language $\cL=(\cL_{yes},\cL_{no}) \in \QMA$, we have a verification circuit $V_\cL$ such that for \textit{yes} instance $x\in\cL_{yes}$, there exists an $m$-qubit state $\ket{\psi}$ s.t.  $\Pr[V_{\cL}\ket{\psi} = \ket{\text{Accept}}] \geq \frac{2}{3}$; 
but for \textit{no} instance $x\in\cL_{no}$, $\forall \ket{\psi}$, $\Pr[V_{\cL}\ket{\psi} = \ket{\text{Accept}}]  \leq \frac{1}{3}$. 

Using the witness-preserving gap amplification for $\QMA$, we can construct a new verification circuit $V'_{\cL}$, such that for $x\in\cL_{yes}$, $\Pr[V'_{\cL}\ket{\psi} = \ket{\text{Accept}}] \geq 1-2^{-(m+2)}$; but for $x\in\cL_{no}$, $\Pr[V'_{\cL}\ket{\psi} = \ket{\text{Accept}}] \leq 2^{-(m+2)}$. 
This means that if we take $\rho$ to be the maximally mixed state $2^{-m} I_m$ then for $x\in\cL_{yes}$, the probability $p_{accept} = \Tr\left(V'_{\cL} \rho {V'_{\cL}}^{\dagger}\right)$ of $V'_{\cL}$ accepting the state $\rho$ is bounded from below by $2^{-m}\cdot(1-2^{-(m+2)}) > 2^{-(m+1)}$. On the other hand, for $x\in\cL_{no}$ the probability $p_{accept}$ is bounded from above by $2^{-(m+2)}$. 
Hence, it is enough to evaluate $p_{accept}$ to within exponential precision in order to distinguish \textit{yes} instances from \textit{no} instances, which can be done by the protocol in \cite{AG17} (up to a straightforward purification). 

\medskip We note that the protocol of Proposition \ref{prop:QMA-in-IP} does not work for $\PreciseQMA$: the witness-preserving gap amplification technique of \cite{MW05,NWZ09} simulates running the verification $V_{\cL}$ several times with independent witnesses, while actually using the same witness each time. This way, a constant completeness-soundness gap of a $\QMA$ language can be efficiently amplified.  
However, to separate acceptance probabilities which are different only by an inverse exponential, exponentially many repetitions will be required. This will make the size of the verification circuit $V'_{\cL}$ exponential. 

\paragraph{$\PostQMA$. } We further note that $\PostQMA=\PSPACE$ (see \cite{MN17} for definition) seemingly does not have the problems mentioned above for $\PreciseQMA$, as the gap between the \textit{yes} and \textit{no} case acceptance probabilities is constant. 
However, in $\PostQMA$, the acceptance probabilities are \textit{conditioned} on measuring part of the computed register (which does not include the output qubit)  in a specific state. 
We do not know of a witness-preserving amplification technique which would support this kind of conditioning. 
In this regard, it is worthwhile mentioning that \cite{DGF20} recently showed that\footnote{Assuming $\PostQMA \not\subseteq \PP$. See Lemma. 21 in \cite{DGF20}.} witness-preserving gap amplification for $\PostQMA$ can only be done up to a point without increasing the size of the witness - in contrast to existing gap amplification techniques, where the amplification depends only on the size of the circuit.

\nonumsection{Acknowledgements}
\noindent
We thank Dorit Aharonov for helpful discussions. 
GK was supported by ISF Grant No. 1692/13 and ISF Grant No. 2635/19. 
AG and YL were supported by ISF Grant No. 1721/17. AG was also partially supported by CYBER grant. 

\nonumsection{References}
\noindent
\newcommand{\etalchar}[1]{$^{#1}$}

\end{document}